\newtheorem{theorem}{Theorem}[section]
\newtheorem{proposition}[theorem]{Proposition}
\newtheorem{corollary}[theorem]{Corollary}
\newtheorem{conjecture}[theorem]{Conjecture}
\newtheorem{lemma}[theorem]{Lemma}
\newtheorem{definition}[theorem]{Definition}
\newtheorem{remark}[theorem]{Remark}
\newcommand{\Factor}{F}
\newcommand{\pfl}{L}
\newcommand{\resalpha}{5}
\DeclareMathOperator{\Prefix}{Prf}
\DeclareMathOperator{\Suffix}{Suf}
\DeclareMathOperator{\occur}{occur}
\begin{document}

\title{Restivo Salemi property for $\alpha$-power free languages with $\alpha\geq 5$ and $k\geq 3$ letters}

\author{Josef Rukavicka\thanks{Department of Mathematics,
Faculty of Nuclear Sciences and Physical Engineering, Czech Technical University in Prague
(josef.rukavicka@seznam.cz).}}

\date{\small{April 28, 2025}\\
   \small Mathematics Subject Classification: 68R15}

\maketitle

\begin{abstract}
In 2009, Shur published the following conjecture: \emph{Let $L$ be a power-free language and let $e(L)\subseteq L$ be the set of words of $L$ that can be extended to a bi-infinite word respecting the given power-freeness. If $u, v \in e(L)$ then $uwv \in e(L)$ for some word $w$.}

Let $L_{k,\alpha}$ denote an $\alpha$-power free language over an alphabet with $k$ letters, where $\alpha$ is a positive rational number and $k$ is positive integer. We prove the conjecture for the languages $L_{k,\alpha}$, where $\alpha\geq 5$ and $k\geq 3$.
\\
\\
\noindent
\textbf{Keywords:} Transition Property; Power Free Words; Restivo Salemi Property
\end{abstract}

\section{Introduction}
Due to a significant overlap of notation, we copy and use the definitions and some explications from sections  ``Introduction'' and ``Preliminaries'' from \cite{10.1007/978-3-031-34326-1_12}.

Let $r$ be a non-empty word, let $\alpha$ be a positive rational number, and let $r^{\alpha}=rr\cdots rt$ be such that $\frac{\vert r^{\alpha}\vert}{\vert r\vert}=\alpha$ and $t$ is a prefix of $r$. We call the word $r^{\alpha}$ an $\alpha$-\emph{power} of $r$. For example $(1234)^{3}=123412341234$, $(1234)^{\frac{7}{4}}=1234123$, and $(1234)^{\frac{1}{2}}=12$.

Suppose a finite or infinite word $w$. Let \[\begin{split}\Theta(w)=\{(r,\alpha)\mid r^{\alpha}\mbox{ is a factor of }w \mbox{ and }r\mbox{ is a non-empty word and }\\ \alpha\mbox{ is a positive rational number}\}\mbox{.}\end{split}\]
We say that $w$ is $\alpha$-\emph{power-free} if \[\{(r,\beta)\in\Theta(w)\mid \beta\geq \alpha\}=\emptyset\] 
and we say that $w$ is $\alpha^+$-power-free if \[\{(r,\beta)\in\Theta(w)\mid \beta>\alpha\}=\emptyset\mbox{.}\] 

The square free ($2$-power-free) and cube free words ($3$-power-free) are well known examples of the power-free words. More detailed introduction into the field of power-free words can be found, for instance, in \cite{Rampersad_Narad2007} and \cite{10.1007/978-3-642-22321-1_3}.

In 1985, Restivo and Salemi presented a list of five problems concerning the extendability of power-free words \cite{10.1007/978-3-642-82456-2_20}. Problem $4$ and Problem $5$ are relevant for the current article, hence we recall these two problems:
\begin{itemize} 
\item
Problem $4$:  Given finite $\alpha$-power-free words $u$ and $v$, decide whether there is a transition word $w$, such that $uwu$ is $\alpha$-power-free.
\item
Problem $5$: Given finite $\alpha$-power-free words $u$ and $v$, find a transition word $w$, if it exists.
\end{itemize}
In 2019, a survey on the solution of all the five problems has been presented in \cite{10.1007/978-3-030-19955-5_27}; in particular, Problem $4$ and Problem $5$ were resolved for some binary languages. In addition, in \cite{10.1007/978-3-030-19955-5_27} the authors resolved Problem $4$ and Problem $5$ for cube free words.

Let $\mathbb{N}$ denote the set of positive integers and let $\mathbb{Q}$ denote the set of rational numbers.
\begin{definition}(\cite[Definition $1$]{10.1007/978-3-030-48516-0_22})
Let 
\[
\begin{split}
\Upsilon=
\{(k,\alpha)\mid k\in \mathbb{N}\mbox{ and }\alpha\in \mathbb{Q}\mbox{ and }k=3 \mbox{ and }\alpha>2\}\\ \cup\{(k,\alpha)\mid k\in \mathbb{N}\mbox{ and }\alpha\in \mathbb{Q}\mbox{ and }k>3\mbox{ and }\alpha\geq 2\}\\ \cup\{(k,\alpha^+)\mid k\in \mathbb{N}\mbox{ and }\alpha\in \mathbb{Q}\mbox{ and } k\geq3\mbox{ and }\alpha\geq 2\}\mbox{.}
\end{split}
\]
\end{definition}
\begin{remark}(\cite[Remark $1$]{10.1007/978-3-030-48516-0_22})
The definition of $\Upsilon$ says that:
If $(k,\alpha)\in \Upsilon$ and $\alpha$ is a number with $+$ then $k\geq 3$ and $\alpha\geq 2$.
If $(k,\alpha)\in \Upsilon$ and $\alpha$ is just a number then $k=3$ and $\alpha>2$ or $k>3$ and $\alpha\geq 2$.
\end{remark}

In 2020, Problem $4$ and Problem $5$ were resolved for $\alpha$-power-free languages over an alphabet with $k$ letters, where $(k,\alpha)\in\Upsilon$ \cite{10.1007/978-3-030-48516-0_22}.

In 2009, a conjecture related to Problems $4$ and Problem $5$ of Restivo and Salemi appeared in \cite{10.1007/978-3-642-02737-6_38}: 
\begin{conjecture} \label{rhju558rf8ui}Let $L$ be a power-free language and let $e(L)\subseteq L$ be the set of words of $L$ that can be extended to a bi-infinite word respecting the given power-freeness. If $u, v \in e(L)$ then $uwv \in e(L)$ for some word $w$. \end{conjecture}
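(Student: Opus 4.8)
I fix $L = L_{k,\alpha}$ with $\alpha \geq 5$ and $k \geq 3$, and reduce the statement to a single \emph{bi-infinite gluing} problem. By the very definition of $e(L)$, a finite word lies in $e(L)$ precisely when it is a factor of some bi-infinite $\alpha$-power-free word. Hence from $u \in e(L)$ I extract the portion of its witnessing bi-infinite word up to and including the chosen occurrence of $u$: this is a left-infinite $\alpha$-power-free word $\lambda$ having $u$ as a suffix. Symmetrically, from $v \in e(L)$ I extract a right-infinite $\alpha$-power-free word $\rho$ having $v$ as a prefix. Now the entire conjecture (for this family) reduces to the claim: \emph{there is a finite word $w$ such that the bi-infinite word $\lambda\, w\, \rho$ is $\alpha$-power-free}. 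Indeed, such $\lambda\,w\,\rho$ would be an explicit bi-infinite $\alpha$-power-free word containing $uwv$ as a factor, which is exactly a witness for $uwv \in e(L)$.

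To produce $w$, the plan is to start from the resolution of Problems $4$ and $5$ for pairs $(k,\alpha)\in\Upsilon$. Since $\alpha \geq 5 \geq 2$ and $k \geq 3$, the pair $(k,\alpha)$ lies in $\Upsilon$, so any two finite $\alpha$-power-free words admit a transition word making their concatenation $\alpha$-power-free. I would apply this not to $u,v$ themselves but to long finite truncations $\hat u$ (a suffix of $\lambda$ containing $u$) and $\hat v$ (a prefix of $\rho$ containing $v$), obtaining a bridge $\hat w$ with $\hat u\,\hat w\,\hat v$ $\alpha$-power-free; folding the extra context into the bridge yields a candidate $w$ with $uwv$ $\alpha$-power-free. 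The only remaining danger is an $\alpha$-power of $\lambda\, w\, \rho$ that uses the infinite tails beyond $\hat u$ and $\hat v$. Here I would invoke the elementary penetration bound: any factor of $\lambda$ (or of $\rho$) having period $p$ has length strictly less than $\alpha p$, since a period-$p$ factor of length $\geq \alpha p$ is itself a $\beta$-power with $\beta = |z|/p \geq \alpha$, contradicting $\alpha$-power-freeness of the tail. Thus a crossing $\alpha$-power of period $p$ reaches at most $\alpha p$ positions into each tail, and the finitely many \emph{short}-period crossing powers are already ruled out once $\hat u,\hat v$ are taken long enough for the transition-word choice to absorb them.

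The main obstacle is the crossing $\alpha$-powers of \emph{large} period, where the bound $\alpha p$ on penetration depth is itself large, so no fixed finite amount of context defeats them by length alone. The plan to overcome this is to use periodicity globally rather than locally: if a power of period $p$ straddles $w$ and penetrates $\lambda$ by more than one full period, its periodicity propagates the structure of $w$ into $\lambda$, and a Fine--Wilf type argument combined with the $\alpha$-power-freeness of $\lambda$ (together with the requirement $\alpha \geq 5$, forcing at least five repetitions and hence a long periodic window) should force a contradiction, provided $w$ is engineered to contain a rigid central block whose presence inside any such long period-$p$ window cannot coexist with the tail being $\alpha$-power-free. Making this quantitative — choosing the length and letter content of the block, exploiting the third letter afforded by $k \geq 3$ to break candidate periods, and checking that both junctions are handled simultaneously — is where I expect the detailed combinatorial work to concentrate, and the thresholds $\alpha \geq 5$ and $k \geq 3$ are precisely what make such a separating block available.
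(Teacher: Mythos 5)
Your reduction to the bi-infinite gluing problem is the same first step as the paper's (it is exactly Theorem \ref{nnbdh8de}), and you correctly isolate the real difficulty: a $\beta$-power with $\beta\geq\alpha$ of \emph{large} period straddling the bridge, against which no fixed finite amount of context $\hat u,\hat v$ helps. But your proposal stops precisely there: the ``rigid central block plus Fine--Wilf'' plan is not an argument, and as sketched it cannot work with the tails you have fixed. You keep the original witnesses $\lambda$ and $\rho$; these may each contain every letter (indeed every short factor) infinitely often, so there need be no block of bounded length whose number of occurrences in $\lambda w\rho$ you can control. Moreover, periodicity propagating into an $\alpha$-power-free tail is not by itself contradictory: the portion of a crossing period-$p$ power lying inside the tail is just a period-$p$ factor of length less than $\alpha p$, which is perfectly legal for every $p$, so Fine--Wilf plus $\alpha$-power-freeness of the tail yields no contradiction.

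The paper's key idea, absent from your sketch, is to \emph{replace} the infinite tails before gluing so that a distinguished letter $x$ becomes non-recurrent in the whole bi-infinite word. Concretely: one picks $x$ occurring in both extended versions of $u$ and $v$ (the case $\Factor(v_1)\cap\Factor(v_2)=\emptyset$ being trivial), uses Proposition \ref{nb3rdyd887j} and Lemma \ref{udi8ds0fne5df} to rebuild the far left and far right tails from words avoiding $x$ while retaining $u$ (resp.\ $v$) together with a long buffer $\eta$ controlled by the set $\Gamma$, and fills the bridge with a prefix $hp$ of an $x$-free right-infinite word $t_1$, where $p$ is a long suffix of an $x$-free left-infinite word $s_2\in\Phi(t_1)$. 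Then the marker $p\widehat w_2x$ occurs at most twice in the entire bi-infinite word (inequality (\ref{dy66jiif89ee})), whereas any crossing $r^{\beta}$ with $\beta>\alpha\geq 5$ must contain $x$ in $r$ and would force at least three occurrences of that marker --- a contradiction uniform in the period $|r|$. This global-rarity mechanism (not rigidity of a block) is what closes the large-period case, and it requires the machinery of $\Delta$, Theorem \ref{d78fju5e4}, and Proposition \ref{mr7e8kjif} that your proposal does not supply. As written, your proof has a genuine gap at its central step.
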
 
In 2018, Conjecture \ref{rhju558rf8ui} appeared also in \cite{SHALLIT201996} using a ``Restivo Salemi property''; it was defined that a language $L$ has the \emph{Restivo Salemi property} if Conjecture \ref{rhju558rf8ui} holds for the language $L$. (Thus Restivo-Salemi property is exactly the transition property for extendable power-free languages.)

Let $\Sigma_k$ denote an alphabet with $k$ letters. 
Let $L_{k,\alpha}\subseteq \Sigma_k^*$ denote the $\alpha$-power-free language, where $\alpha$ is a positive rational number and $k$ is positive integer. We have that $\pfl_{k,\alpha}=\{w\in\Sigma_k^*\mid w \mbox{ is }\alpha\mbox{-power-free}\}$.

In \cite[Remark $41$]{SHALLIT201996}, it was mentioned that for $2^+\leq \alpha\leq\frac{7}{3}$ the language $\pfl_{2,\alpha}$ has the Restivo Salemi property (Conjecture \ref{rhju558rf8ui} holds for $\pfl_{2,\alpha}$). Recall that $2^+$-power-free language is the overlap free language. Otherwise, as to our best knowledge, Conjecture \ref{rhju558rf8ui} remains open.

Let 
\[\widetilde\Upsilon=
\{(k,\alpha)\mid k\in \mathbb{N}\mbox{ and }\alpha\in \mathbb{Q}\mbox{ and }k\geq 3 \mbox{ and }\alpha\geq 5\}\mbox{.}\]
Clearly $\widetilde \Upsilon\subset \Upsilon$. In the current article, we prove that if $(k,\alpha)\in\widetilde\Upsilon$ then Conjecture \ref{rhju558rf8ui} holds for the language $\pfl_{k,\alpha}$; see Corollary \ref{dj99813jf}. The requirement that \(\alpha\geq 5\) was necessary in the proof of Proposition \(3\) in \cite{10.1007/978-3-031-34326-1_12}. Since we use the results of \cite{10.1007/978-3-031-34326-1_12}, we require also for the current paper that \(\alpha\geq 5\).

In order to prove our result, we apply some ideas and results from \cite{10.1007/978-3-030-48516-0_22} and \cite{10.1007/978-3-031-34326-1_12}. 
We briefly explain the very basic idea. If $u,v$ are square power-free words ($2$-power-free) and $x$ is a letter such that $x$ is a factor of neither $u$ nor $v$, then clearly $uxv$ is square free. Just note that there cannot be a factor in $uxv$ which is a square and contains $x$, because $x$ has only one occurrence in $uxv$. 

This very basic idea was generalized to a finite number of occurrences of \(x\) instead of a one single occurrence of \(x\): In \cite{10.1007/978-3-030-48516-0_22}, the author constructed right and left infinite words containing a given factor and a non-recurrent letter $x$. The non-recurrence of the letter $x$ and several involved observations of recurrent and non-recurrent factors allowed the author to resolve Problem \(4\) and  Problem \(5\) of Restivo and Salemi for languages \(\pfl_{k,\alpha}\), where \((k,\alpha)\in\Upsilon\). 

In 2023, the construction of right and left infinite power-free words with a non-recurrent letter has been extended to bi-infinite $\alpha$-power-free words over an alphabet with $k$-letters, where $(k,\alpha)\in\widetilde \Upsilon$ \cite{10.1007/978-3-031-34326-1_12}. In \cite{10.1007/978-3-031-34326-1_12}, it was also predicted that these bi-infinite power-free words with a non-recurrent letter could help to prove Conjecture \ref{rhju558rf8ui}. In the current article we apply indeed these bi-infinite $\alpha$-power-free words with a non-recurrent letter \(x\) to prove Conjecture \ref{rhju558rf8ui} for languages \(\pfl_{k,\alpha}\), where \((k,\alpha)\in\widetilde \Upsilon\).

\section{Preliminaries}

Let $\Sigma_k^+$ denote the set of all non-empty finite words over $\Sigma_k$, let $\epsilon$ denote the empty word, let $\Sigma_k^*=\Sigma_k^+\cup\{\epsilon\}$, let $\Sigma_k^{\mathbb{N},R}$ denote the set of all right infinite words over $\Sigma_k$, let $\Sigma_k^{\mathbb{N},L}$ denote the set of all left infinite words over $\Sigma_k$, and let $\Sigma_k^{\mathbb{Z}}$ denote the set of all bi-infinite words over $\Sigma_k$.

Let $\Sigma_k^{\infty}=\Sigma_k^{\mathbb{N},L}\cup \Sigma_k^{\mathbb{N},R}\cup \Sigma_k^{\mathbb{Z}}$. We call $w\in \Sigma_k^{\infty}$ an infinite word.

Let $\occur(w,t)$ denote the number of all occurrences of the non-empty factor $t\in \Sigma_k^+$ in the word $w\in \Sigma_k^*\cup\Sigma_k^{\infty}$. If $w\in \Sigma_k^{\infty}$ and $\occur(w,t)=\infty$, then we call $t$ a \emph{recurrent} factor in $w$.

Let $\Factor(w)$ denote the set of all finite factors of a finite or infinite word $w\in \Sigma_k^*\cup\Sigma_k^{\infty}$. The set $\Factor(w)$ contains the empty word and if $w$ is finite then also $w\in \Factor(w)$. Let $\Factor_r(w)\subseteq \Factor(w)$ denote the set of all recurrent non-empty factors of $w\in \Sigma_k^{\infty}$.

Let $\Prefix(w)\subseteq \Factor(w)$ denote the set of all prefixes of $w\in \Sigma_k^*\cup\Sigma_k^{\mathbb{N},R}$ and let $\Suffix(w)\subseteq \Factor(w)$ denote the set of all suffixes of $w\in \Sigma_k^*\cup\Sigma_k^{\mathbb{N},L}$. We define that $\epsilon\in \Prefix(w)\cap\Suffix(w)$ and if $w$ is finite then also $w\in \Prefix(w)\cap\Suffix(w)$. 

Let $\pfl_{k,\alpha}^{\infty}\subseteq\Sigma_k^{\infty}$ denote the set of all infinite $\alpha$-power-free words over $\Sigma_k$; formally $\pfl_{k,\alpha}^{\infty}=\{w\in \Sigma_k^{\infty}\mid\Factor(w)\subseteq \pfl_{k,\alpha}\}$. In addition we define $\pfl_{k,\alpha}^{\mathbb{N},R}=\pfl_{k,\alpha}^{\infty}\cap\Sigma_k^{\mathbb{N},R}$, $\pfl_{k,\alpha}^{\mathbb{N},L}=\pfl_{k,\alpha}^{\infty}\cap\Sigma_k^{\mathbb{N},L}$, and $\pfl_{k,\alpha}^{\mathbb{Z}}=\pfl_{k,\alpha}^{\infty}\cap\Sigma_k^{\mathbb{Z}}$; it means the sets of right infinite, left infinite, and bi-infinite $\alpha$-power-free words, respectively.

We define the \emph{reversal} $w^R$ of a finite or infinite word $w=\Sigma_k^*\cup\Sigma_k^{\mathbb{N},R}\cup\Sigma_k^{\mathbb{N},L}$ as follows: \begin{itemize}
\item $\epsilon^R=\epsilon$.
\item If $w\in\Sigma_k^+$ and $w=w_1w_2\dots w_m$, where $w_i\in \Sigma_k$ and $1\leq i\leq m$, then $w^R=w_mw_{m-1}\dots w_2w_1$. 
\item If $w\in \Sigma_k^{\mathbb{N},L}$ and $w=\dots w_2w_1$, where $w_i\in \Sigma_k$ and $i\in \mathbb{N}$, then $w^R=w_1w_2\dots\in \Sigma_k^{\mathbb{N},R}$. \item If $w\in \Sigma_k^{\mathbb{N},R}$ and $w=w_1w_2\dots$, where $w_i\in \Sigma_k$ and $i\in \mathbb{N}$, then $w^R=\dots w_2w_1\in \Sigma_k^{\mathbb{N},L}$. 
\end{itemize}

\begin{remark}
It is obvious that the reverse function preserves the power-freeness and that every factor of an $\alpha$-power-free word is also $\alpha$-power-free.
\end{remark}

The next proposition is a reformulation of Corollary $1$ from \cite{10.1007/978-3-030-48516-0_22} using only the notation of the current article.
\begin{proposition}(reformulation of \cite[Corollary $1$]{10.1007/978-3-030-48516-0_22})
\label{nb3rdyd887j}
If $(k,\alpha)\in\Upsilon$, $v\in\pfl_{k,\alpha}^{\mathbb{N},L}$, $z\in\Suffix(v)$, $x\in\Factor_r(v)\cap\Sigma_k$, $s\in\pfl_{k,\alpha}^{\mathbb{N},L}$, and $x\not\in\Factor(s)$ then there is a finite word $u\in\Sigma_k^*$ such that  $z\in\Suffix(su)$ and $su\in\pfl_{k,\alpha}^{\mathbb{N},L}$.
\end{proposition}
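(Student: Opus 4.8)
The plan is to glue $s$ to a suitable finite suffix of $v$ that ends with $z$, using the recurrent letter $x$ as a barrier. Since $x\in\Factor_r(v)\cap\Sigma_k$ is recurrent and $z\in\Suffix(v)$, the word $v$ has infinitely many occurrences of $x$ lying to the left of its terminal copy of $z$, and each such occurrence yields a suffix of $v$ of the form $x\cdots z$. I would take $u$ to be one such suffix, so that $z\in\Suffix(u)$, the first letter of $u$ is $x$, and $u\in\pfl_{k,\alpha}$ (being a factor of $v$). The candidate word is then $su\in\Sigma_k^{\mathbb{N},L}$, and it remains to choose the occurrence of $x$ so that $su$ is $\alpha$-power-free. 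Observe that because $x\notin\Factor(s)$ while $u$ begins with $x$, every occurrence of $x$ in $su$ lies inside $u$; in particular $x$ is non-recurrent in $su$, matching the heuristic from the introduction.

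Second, I reduce power-freeness of $su$ to controlling powers that cross the junction. Both $s$ and $u$ are $\alpha$-power-free, so any factor $r^{\beta}$ of $su$ with $\beta\geq\alpha$ that is not already a factor of $s$ or of $u$ must use a nonempty suffix of $s$ together with a nonempty prefix of $u$. Place the junction between position $0$ (the last letter of $s$) and position $1$ (the leading $x$ of $u$). The prefix of $u$ used by such a power contains position $1$, hence contains $x$; write $p=|r|$. If the power occupied a position $\leq 1-p$, periodicity would give $su[1-p]=su[1]=x$ with $1-p\leq 0$, forcing an $x$ into $s$, a contradiction. Hence the power starts at a position in $\{2-p,\dots,0\}$, so $p\geq 2$ and the power consumes at most $p-1$ letters of $s$; moreover its first period straddles the junction while all later periods lie inside $u$. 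Since $(k,\alpha)\in\Upsilon$ gives $\alpha\geq 2$, we also know $\beta\geq 2$, so at least one full period already sits inside $u$.

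Third, and this is the step I expect to be the main obstacle, I must choose the occurrence of $x$ at which $u$ begins so that no such straddling power survives. Unwinding the periodicity, a crossing power of period $p$ exists precisely when the suffix of $s$ of length $<p$ that it consumes coincides with the corresponding prefix of the period $r$, while $r$ itself (which contains $x$) is realized starting near the beginning of $u$; equivalently, $s$ ends with a word $W$ of length $<p$ such that $Wx\cdots$ continues periodically inside $u$. The content of $s$ near its end is fixed, so the task is to select, among the infinitely many occurrences of $x$, a cut for which no period length $p$ admits such a completion. This is exactly where the detailed bookkeeping of recurrent and non-recurrent factors from the construction of \cite{10.1007/978-3-030-48516-0_22} would be used: one exploits the recurrence of $x$ together with $\alpha\geq 2$ to bound the finitely many periods that could cross and to rule each out by an appropriate choice of occurrence, with the reversal symmetry letting one transfer the analysis between prefixes and suffixes.

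Finally, once such a $u$ is fixed, $su$ has no $\alpha$-power crossing the junction and none inside $s$ or $u$, so $\Factor(su)\subseteq\pfl_{k,\alpha}$, that is, $su\in\pfl_{k,\alpha}^{\mathbb{N},L}$; and $z\in\Suffix(u)\subseteq\Suffix(su)$ by construction, which is the claim. The only genuinely delicate point is the selection step of the third paragraph; the reduction to crossing powers and the bookkeeping at the junction are routine.
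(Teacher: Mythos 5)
First, note that the paper does not prove Proposition~\ref{nb3rdyd887j} at all: it is imported verbatim as a reformulation of Corollary~1 of \cite{10.1007/978-3-030-48516-0_22}, so there is no internal proof to match your attempt against. Judged on its own terms, your proposal has a genuine gap, and you concede as much in your third paragraph: the entire mathematical content of the statement is the selection of the occurrence of $x$ at which to cut $v$, and you do not carry that step out but instead defer it to ``the detailed bookkeeping \dots from \cite{10.1007/978-3-030-48516-0_22}''. Your first two paragraphs are correct but routine: taking $u$ to be a suffix of $v$ beginning at an occurrence of $x$ and ending with $z$, any $\alpha$-power $r^{\beta}$ crossing the junction must contain the leading $x$ of $u$, hence (since $x\not\in\Factor(s)$) starts at most $|r|-1$ letters inside $s$; and since $\beta\geq\alpha\geq 2$, a full conjugate of $r$ lies inside $u$. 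Pushing this one step further than you do, the $u$-part of $r^{\beta}$ is a $\beta'$-power of a conjugate of $r$ with $\beta'>\beta-1$, so $\alpha$-power-freeness of $u$ only yields $\beta<\alpha+1$; crossing powers with exponent in $[\alpha,\alpha+1)$ are not excluded by anything you have said.

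The reason the missing step is not ``routine bookkeeping'' is that your proposed quantifier structure does not obviously close. As you move the cut to earlier occurrences of $x$, the word $u$ gets longer and the set of admissible period lengths $p$ grows with it, so there is no fixed finite list of periods to ``rule out one by one by an appropriate choice of occurrence''; you need a single cut that simultaneously kills all crossing powers, and the fixed suffix of $s$ cannot be modified to help. This is precisely why the source construction does not simply splice $s$ onto an arbitrary $x\cdots z$ suffix of $v$, but imposes quantitative conditions of the kind that reappear in the present paper's Definition of $\Gamma$ and $\Delta$ (a buffer $\eta$ with $|\eta|\geq(\alpha+1)\alpha^{|w|}|w|$, a uniqueness condition $\occur(s\sigma w,w)=1$, and control of which factors of $u$ can recur in $s$). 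Without an argument of that type --- or at least a concrete mechanism showing that some occurrence of the recurrent letter $x$ admits no completion of a crossing period --- your outline establishes only that $su$ is left infinite with $z$ as a suffix and with finitely many occurrences of $x$, not that $su\in\pfl_{k,\alpha}^{\mathbb{N},L}$.
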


\begin{remark}(\cite[Remark \(3\)]{10.1007/978-3-031-34326-1_12})
Proposition \ref{nb3rdyd887j} says that if $z$ is a finite \(\alpha\)-power-free word that can be extended to a left infinite \(\alpha\)-power-free word having a letter $x$ as a recurrent factor and $s$ is a left infinite \(\alpha\)-power-free word not containing the letter $x$ as a factor, then there is a left infinite \(\alpha\)-power-free word containing $z$ as a suffix and having only a finite number of occurrences of $x$.
\end{remark}

The next elementary lemma follows easily from famous theorems of Thue \cite{thue1906ueber,thue1912gegenseitige}.

\begin{lemma}(reformulation of \cite[Lemma $2$]{10.1007/978-3-030-48516-0_22})
\label{dy77ejhfiffu}
If $k\geq 3$ and $\alpha>2$ then $\pfl_{k-1,\alpha}^{\mathbb{N},R}\not=\emptyset$.
\end{lemma}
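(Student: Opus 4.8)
The plan is to exhibit an explicit right-infinite $\alpha$-power-free word over the alphabet $\Sigma_{k-1}$, splitting into two cases according to the number of letters $k-1\geq 2$.

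First I would record the elementary monotonicity of power-freeness that underlies the argument: if $\beta\leq\alpha$ then every $\beta$-power-free word is $\alpha$-power-free, because a factor $r^{\gamma}$ with $\gamma\geq\alpha$ would satisfy $\gamma\geq\alpha\geq\beta$ and so already violate $\beta$-power-freeness; and likewise every $2^+$-power-free word is $\alpha$-power-free whenever $\alpha>2$, since $\gamma\geq\alpha>2$ captures exactly the factors excluded by $2^+$-power-freeness. These two observations reduce the lemma to producing, over $\Sigma_{k-1}$, either a square-free right-infinite word (usable when $\alpha\geq 2$) or an overlap-free, i.e.\ $2^+$-power-free, right-infinite word (usable when $\alpha>2$).

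For $k-1\geq 3$ I would invoke Thue's theorem that there exists an infinite square-free word over a three-letter alphabet; identifying three of the letters of $\Sigma_{k-1}$ with such an alphabet, this word lies in $\Sigma_{k-1}^{\mathbb{N},R}$ and, being square-free and hence $2$-power-free, is $\alpha$-power-free for every $\alpha\geq 2$, in particular for $\alpha>2$. For $k-1=2$ no infinite binary square-free word exists, so here I would instead invoke Thue's theorem that the Thue--Morse word is overlap-free, i.e.\ $2^+$-power-free; by the second monotonicity observation it is then $\alpha$-power-free for every $\alpha>2$. In either case the constructed word is a witness showing $\pfl_{k-1,\alpha}^{\mathbb{N},R}\neq\emptyset$.

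The only genuine subtlety---and the reason the hypothesis reads $\alpha>2$ rather than $\alpha\geq 2$---lies in the binary case: since binary square-free words do not exist, the argument must fall back on overlap-freeness, which guarantees $\alpha$-power-freeness only for $\alpha$ strictly above $2$. Everything else is a direct appeal to Thue's classical constructions together with the trivial comparison of power-freeness thresholds.
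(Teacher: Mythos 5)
Your proof is correct and matches the paper's intent exactly: the paper gives no written proof, merely remarking that the lemma ``follows easily from famous theorems of Thue'' and citing his 1906 and 1912 papers, which are precisely the infinite ternary square-free word and the overlap-free Thue--Morse word you invoke in your two cases. Your case split on $k-1=2$ versus $k-1\geq 3$, together with the monotonicity observations, is a faithful and complete elaboration of that citation.
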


Given $t\in\Sigma_k^{\mathbb{N},R}$, let
\(\Phi(t)=\{s\in \Sigma_{k}^{\mathbb{N},L}\mid \Factor(s)\subseteq \Factor_r(t)\cup\{\epsilon\}\}\mbox{.}\) 
It means that all non-empty factors of $s\in \Phi(t)$ are recurrent factors of $t$. The following lemma is an immediate consequence of König's lemma \cite{Koenig1926}.
\begin{lemma}(\cite[Lemma $4$]{10.1007/978-3-030-48516-0_22})
\label{lrktikl009iu8}
If \(k\in\mathbb{N}\) and $t\in \Sigma_k^{\mathbb{N},R}$ then $\Phi(t)\not=\emptyset$.
\end{lemma}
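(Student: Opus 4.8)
The plan is to construct a finitely branching infinite tree whose vertices are the finite recurrent factors of $t$, and then invoke König's lemma to extract an infinite path; reading this path off will produce the desired left infinite word $s\in\Phi(t)$. The guiding observation is that since $s$ is to be left infinite, I should grow words by \emph{prepending} letters, so that the words appearing along a root-to-leaf path are exactly the suffixes of the limit word.

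First I would record two elementary closure facts about $\Factor_r(t)$. Since $t$ is right infinite over the finite alphabet $\Sigma_k$, for every $n$ there are infinitely many positions of $t$ at which a length-$n$ factor begins, but only at most $k^n$ distinct words of length $n$; by the pigeonhole principle some length-$n$ word occurs at infinitely many positions, so $\Factor_r(t)$ contains a word of each length $n$. Second, if $u\in\Factor_r(t)$ then every factor of $u$ lies in $\Factor_r(t)$, because each of the infinitely many occurrences of $u$ carries, at a fixed offset, an occurrence of that factor, and distinct occurrences of $u$ induce distinct occurrences of it.

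Next I would define the tree $T$: its root is $\epsilon$, its vertices are the words of $\Factor_r(t)\cup\{\epsilon\}$, and the parent of a non-empty vertex $u=au'$ with $a\in\Sigma_k$ is $u'$, the word obtained by deleting the leading letter. This is well defined because $u'$, being a factor of the recurrent word $u$, is itself recurrent by the second closure fact; iterating, every vertex is joined to the root, so $T$ is genuinely a tree. It is finitely branching since each vertex has at most $k$ children (one per possible leading letter), and it is infinite because the first closure fact furnishes a vertex at every depth $n$. König's lemma then yields an infinite path $\epsilon=u_0,u_1,u_2,\dots$ with $u_{n+1}=a_{n+1}u_n$ for letters $a_{n+1}\in\Sigma_k$.

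Finally I would read off $s=\dots a_3a_2a_1\in\Sigma_k^{\mathbb{N},L}$ and verify $s\in\Phi(t)$. Each $u_n=a_na_{n-1}\cdots a_1$ is a suffix of $s$ and is a recurrent factor of $t$; any non-empty finite factor of $s$ is contained in some suffix $u_n$, hence is a factor of a recurrent factor and so recurrent by the second closure fact. Thus $\Factor(s)\subseteq\Factor_r(t)\cup\{\epsilon\}$, giving $s\in\Phi(t)$. The main thing to get right is the bookkeeping of direction, namely orienting the tree by deleting the \emph{leading} letter so that an infinite path corresponds to a \emph{left} infinite word, together with the pigeonhole argument guaranteeing that $T$ is infinite; once these are in place the appeal to König's lemma is routine.
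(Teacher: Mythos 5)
Your proof is correct and follows exactly the route the paper intends: the paper states the lemma as ``an immediate consequence of K\H{o}nig's lemma'' (citing \cite[Lemma 4]{10.1007/978-3-030-48516-0_22}) without giving details, and your tree of recurrent factors ordered by prepending letters, together with the pigeonhole argument for infinitude and the closure of $\Factor_r(t)$ under taking factors, is precisely the standard way to make that appeal precise. The directional bookkeeping (growing suffixes so the limit is a left infinite word) is handled correctly.
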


The main result of \cite{10.1007/978-3-031-34326-1_12} can be formulated as follows:
\begin{theorem}
\label{yud7uej41}
If \((k,\alpha)\in\widetilde\Upsilon\), $v\in L_{k,\alpha}^{\mathbb{Z}}$, $w\in\Factor(v)\setminus\{\epsilon\}$, then there are $\widetilde v\in L_{k,\alpha}^{\mathbb{Z}}$ and a letter $x\in\Sigma_k$ such that $w\in\Factor(\widetilde v)$, $x\in\Factor(w)$, and $\occur(w,x)<\infty$.
\end{theorem}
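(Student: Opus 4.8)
The plan is to keep the finite factor $w$ in the centre and to replace the two infinite tails of $v$ by tails in which a chosen letter of $w$ no longer recurs. Cutting $v$ at the right end of a fixed occurrence of $w$ gives a left-infinite word $v^-\in\pfl_{k,\alpha}^{\mathbb{N},L}$ with $w\in\Suffix(v^-)$, and cutting at the left end of the same occurrence gives a right-infinite word $v^+\in\pfl_{k,\alpha}^{\mathbb{N},R}$ with $w\in\Prefix(v^+)$. Fix any letter $x\in\Factor(w)\cap\Sigma_k$. Since $k\geq 3$ and $\alpha\geq 5>2$, Lemma~\ref{dy77ejhfiffu} together with the reversal map (which preserves power-freeness) supplies a left-infinite and a right-infinite $\alpha$-power-free word over the $(k-1)$-letter alphabet $\Sigma_k\setminus\{x\}$, say $s_L$ and $s_R$; these are the $x$-free ``reservoirs'' that Proposition~\ref{nb3rdyd887j} requires.

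Next I would build the two tails. If $x\in\Factor_r(v^-)$, I apply Proposition~\ref{nb3rdyd887j} with $z=w$, with the word $v^-$ in the role of $v$, and with $s=s_L$, obtaining a finite word $u_L$ such that $w\in\Suffix(s_Lu_L)$ and $P:=s_Lu_L\in\pfl_{k,\alpha}^{\mathbb{N},L}$; since $x\notin\Factor(s_L)$, the letter $x$ occurs in $P$ only inside the finite block $u_L$, so $\occur(P,x)<\infty$. If $x\notin\Factor_r(v^-)$, I simply take $P:=v^-$, in which $x$ is already non-recurrent. Either way $P$ is a left-infinite $\alpha$-power-free word with suffix $w$ and $\occur(P,x)<\infty$. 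Applying the reversal of Proposition~\ref{nb3rdyd887j} to $v^+$ produces symmetrically a right-infinite $\alpha$-power-free word $Q$ with $w\in\Prefix(Q)$ and $\occur(Q,x)<\infty$. Writing $P=P'w$ and $Q=wQ'$, I set $\widetilde v:=P'\,w\,Q'$. Then $\widetilde v\in\Sigma_k^{\mathbb{Z}}$, $w\in\Factor(\widetilde v)$, $x\in\Factor(w)$, and $\occur(\widetilde v,x)=\occur(P,x)+\occur(Q,x)-\occur(w,x)<\infty$ (the finite word $w$ of course satisfies $\occur(w,x)<\infty$ automatically), so all required properties hold once $\widetilde v$ is shown to be $\alpha$-power-free.

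The heart of the argument is this last point, and it is where $\alpha\geq 5$ enters. Suppose $r^{\alpha}$ is a factor of $\widetilde v$. If it lies inside $P$ or inside $Q$ it contradicts the $\alpha$-power-freeness of those words, so it must be a \emph{spanning} factor meeting both $P'$ and $Q'$; as $w$ sits contiguously between them, $r^{\alpha}$ then contains all of $w$, hence contains $x$. Because every letter of $r^{\alpha}=r^{\lfloor\alpha\rfloor}t$ with $t\in\Prefix(r)$ is a letter of $r$, the root $r$ itself must contain $x$; and since $\lfloor\alpha\rfloor\geq 5$ there are at least five consecutive copies of $r$, each carrying an occurrence of $x$ at the same relative position. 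This yields at least five occurrences of $x$ in arithmetic progression with common difference $\lvert r\rvert$. All occurrences of $x$ in $\widetilde v$ lie in one fixed finite window $W$ (as $\occur(\widetilde v,x)<\infty$), so the spread $4\lvert r\rvert$ of this progression is at most $\lvert W\rvert$, forcing $\lvert r\rvert\leq\lvert W\rvert/4$; consequently the entire power is confined to a bounded window around $w$ whose size depends only on $\lvert W\rvert$.

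The remaining, and genuinely delicate, task is to rule out such a confined spanning power; this is precisely the step performed in Proposition~$3$ of \cite{10.1007/978-3-031-34326-1_12}, and it is the reason $\alpha\geq 5$ is imposed. I expect this to be the main obstacle: one must show that a bounded-period $\alpha$-power straddling the junction would already force a forbidden repetition inside one of the $\alpha$-power-free pieces $P$, $Q$, or inside $v$ itself. The five guaranteed copies of $r$ are what make the confinement tight enough to localise the power and to trace its periodicity outward into the $x$-free reservoirs $s_L,s_R$, where a full period containing $x$ cannot occur; with fewer than five copies neither the period bound nor the localisation is strong enough to close the argument. The whole construction is symmetric under reversal, so the left and right analyses are identical up to reversing words, and $x\in\Factor(w)$ guarantees the produced letter is indeed a letter of $w$.
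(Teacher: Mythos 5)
First, note that the paper does not prove Theorem~\ref{yud7uej41} itself: it is quoted as the main result of \cite{10.1007/978-3-031-34326-1_12}, and within the present paper its content is recovered by combining Lemma~\ref{udi8ds0fne5df} (which packages the left half of $v$ into a tuple of $\Delta$, in particular arranging an $x$-free left tail $\overline s$, a buffer $\eta$ satisfying the length condition of $\Gamma$, and a \emph{unique} occurrence of $w$ in $\overline s\sigma w$) with Theorem~\ref{d78fju5e4}, which does the gluing onto an $x$-free right-infinite word $t$ at a carefully chosen prefix $\widehat\eta$ of the buffer. Your construction follows the same general philosophy (replace the tails by $x$-free reservoirs via Lemma~\ref{dy77ejhfiffu} and Proposition~\ref{nb3rdyd887j}), but it glues differently and, crucially, does not close.

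The genuine gap is the $\alpha$-power-freeness of $\widetilde v=P'wQ'$, which you explicitly leave open. Your own analysis only shows that a spanning power $r^{\alpha}$ must contain $x$ in its root and hence satisfies $4\vert r\vert\leq\vert W\vert$; boundedness of the power is not a contradiction, and nothing in your construction forbids such a bounded power. Indeed $P=P'w$ and $Q=wQ'$ are produced \emph{independently} by Proposition~\ref{nb3rdyd887j}, which gives no control over how $\Suffix(P)$ and $\Prefix(Q)$ interact: if $P'$ ends in $u$ and $Q'$ happens to begin with $(wu)^{4}w\cdots$, then $\widetilde v$ contains $(uw)^{5}$, and no hypothesis you have established rules this out. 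The cited machinery avoids exactly this by never gluing two $x$-containing halves directly at $w$: it glues $s\sigma w\widehat\eta x$ (with $w$ occurring only once in $s\sigma w$, per Property~\ref{ddhy7vzlp} of Definition~\ref{ccn512rf1x}, and with $\widehat\eta$ drawn from a buffer whose length is controlled by $\Gamma$) onto a tail containing no $x$ at all, so that any spanning power is pinned down by the isolated occurrences of $x$ and of $w$. Your appeal to ``Proposition~3 of \cite{10.1007/978-3-031-34326-1_12}'' cannot substitute for this, because you have not established the hypotheses (uniqueness of the occurrence of $w$, the $\Gamma$ length condition, $x$-freeness of one side of the junction) under which that argument operates. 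To repair the proof along the paper's lines, you should instead invoke Lemma~\ref{udi8ds0fne5df} to obtain $(\overline s,\sigma,w,\eta,x,\epsilon)\in\Delta$ and then apply Theorem~\ref{d78fju5e4} with $t\in\pfl_{k-1,\alpha}^{\mathbb{N},R}$ over $\Sigma_k\setminus\{x\}$; the resulting $\overline s\sigma w\widehat\eta xt$ has only finitely many occurrences of $x$ and contains $w$. (As a side remark, the condition $\occur(w,x)<\infty$ in the statement is vacuous for finite $w$ and is evidently intended to read $\occur(\widetilde v,x)<\infty$, as you correctly observed.)
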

\begin{remark}
We will not apply Theorem \ref{yud7uej41} directly, because we will need some additional properties of the constructed bi-infinite word. Instead, we will apply \cite[Theorem $1$]{10.1007/978-3-031-34326-1_12} (see Theorem \ref{d78fju5e4} in the current article).
\end{remark}

\section{Bi-infinite $\alpha$-power words}

For the rest of the article suppose that $(k,\alpha)\in\widetilde\Upsilon$; it means $k\geq 3$ and $\alpha\geq \resalpha$.
We define two technical sets $\Gamma$ and $\Delta$.
\begin{definition}(\cite[Definition $2$]{10.1007/978-3-031-34326-1_12})
Let $\Gamma$ be a set of triples such that $(w,\eta ,u)\in\Gamma$ if and only if \begin{itemize}
\item $w\in\Sigma_k^+$, $\eta ,u\in\Sigma_k^*$,  and 
\item if $\vert u\vert\leq \vert w\vert$ then $\vert \eta \vert\geq(\alpha+1)\alpha^{\vert w\vert-\vert u\vert}\vert w\vert\mbox{.}$
\end{itemize}
\end{definition}
\begin{remark}(\cite[Remark $4$]{10.1007/978-3-031-34326-1_12})
The set $\Gamma$ contains triples of finite words $w,\eta,u$ such that $w$ is non-empty, and if $u$ is shorter than $w$, then the word $\eta$ is ``sufficiently'' longer than $w$. 
\end{remark}
\begin{definition}(\cite[Definition $3$]{10.1007/978-3-031-34326-1_12})
\label{ccn512rf1x}
Let $\Delta$ be a set of $6$-tuples such that \[(s,\sigma ,w,\eta ,x,u)\in\Delta\] if and only if
\begin{enumerate}
\item $s\in\Sigma_{k}^{\mathbb{N}, L}$, $\sigma ,\eta ,u\in\Sigma_k^*$, $w\in\Sigma_k^+$, $x\in\Sigma_k$, 
\item \label{du77bxn21b} $s\sigma w\eta xu\in\pfl_{k,\alpha}^{\mathbb{N},L}$,
\item \label{du87ejh14} $(w,\eta ,u)\in\Gamma$,
\item \label{ddhy7vzlp} $\occur(s\sigma w,w)=1$, and 
\item $x\not\in\Factor(s)\cup\Factor(u)$.
\end{enumerate}
\end{definition}

In \cite{10.1007/978-3-031-34326-1_12}, it was shown that if $(s,\sigma ,w,\eta ,x,\epsilon)\in\Delta$ and $t$ is a right infinite $\alpha$-power-free word with no occurrence of the letter $x$ then there is a bi-infinite $\alpha$-power-free word containing the factor $w$ and having only a finite number of occurrences of $x$:
\begin{theorem}(\cite[Theorem $1$]{10.1007/978-3-031-34326-1_12})
\label{d78fju5e4}
If $(s,\sigma ,w,\eta ,x,\epsilon)\in\Delta$, $t\in\pfl_{k,\alpha}^{\mathbb{N},R}$, and $x\not\in\Factor(t)$ then there is $\widehat \eta \in\Prefix(\eta )$ such that $s\sigma w\widehat \eta xt\in\pfl_{k,\alpha}^{\mathbb{Z}}$.
\end{theorem}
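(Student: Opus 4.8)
The plan is to regard the candidate bi-infinite word as $y=s\sigma w\widehat\eta xt$ and to single out the displayed occurrence of the letter $x$ sitting between $\widehat\eta$ and $t$; call its position the \emph{boundary}. First I would reduce the problem to controlling powers that meet the boundary. Any factor of $y$ avoiding the boundary lies either inside $s\sigma w\widehat\eta$, which is a factor of the $\alpha$-power-free word $s\sigma w\eta x$ and hence $\alpha$-power-free, or inside $t\in\pfl_{k,\alpha}^{\mathbb{N},R}$; so every $\alpha$-power of $y$ must contain the boundary $x$. Now let $r^{\beta}$ (with $\beta\geq\alpha$) be such a power. Since every letter of $r^{\beta}$ is a letter of $r$ and $x$ occurs in $r^{\beta}$, we get $x\in\Factor(r)$. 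Because $x\notin\Factor(s)\cup\Factor(t)$, the boundary is the rightmost occurrence of $x$ in $r^{\beta}$; and if $r^{\beta}$ reached $|r|$ positions past the boundary, periodicity would place an $x$ strictly inside $t$, which is impossible. Hence $r^{\beta}$ overhangs $t$ by at most $|r|-1$ letters, so the part of $r^{\beta}$ lying in $s\sigma w\widehat\eta x$ is itself a power $r^{\beta'}$ ending exactly at the boundary, with $\beta'>\alpha-1$.

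Next I would extract the combinatorial signature of a bad cut. Since $\alpha\geq\resalpha=5$, we have $\beta'>4$, so $r^{\beta'}$ contains at least four full copies of $r$; this produces occurrences of $x$ at the boundary and at the aligned positions $|r|,2|r|,3|r|$ letters to its left, all but the boundary one lying in $\sigma w\widehat\eta$ (not in $s$). Equivalently, the suffix of $s\sigma w\widehat\eta$ immediately preceding the boundary is $|r|$-periodic over a window of length exceeding $3|r|$, and the overhang forces the length-$\delta$ prefix of $t$ (with $\delta\leq|r|-1$) to agree with the matching letters of that period. Crucially, these periodic windows involve only genuine letters of $s\sigma w\eta$ — only the boundary $x$ itself is inserted — so they are honest periodic factors of the $\alpha$-power-free word $s\sigma w\eta x$.

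The decisive step is then to choose $\widehat\eta$. Call $p\in\Prefix(\eta)$ bad if $s\sigma wpxt\notin\pfl_{k,\alpha}^{\mathbb{Z}}$; I would bound the number of bad prefixes and compare it with $\lvert\Prefix(\eta)\rvert=\lvert\eta\rvert+1$. The idea is that two bad cuts with nearby boundaries and a common period length carry periodic windows overlapping in more than one period, so by the Fine–Wilf theorem their periods are commensurate and the merged window is $|r|$-periodic over an ever longer stretch; a sufficiently long such stretch would be an $\alpha$-power inside the $\alpha$-power-free word $s\sigma w\eta x$, a contradiction. Thus bad cuts sharing a short period cannot be densely packed, while long periods are excluded because their windows reach back across $w$ and, by $\occur(s\sigma w,w)=1$, cannot repeat $w$ consistently. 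Quantitatively the count of bad cuts is of order $\alpha^{\lvert w\rvert}\lvert w\rvert$, which is strictly below $\lvert\eta\rvert+1$ precisely because $(w,\eta,\epsilon)\in\Gamma$ gives $\lvert\eta\rvert\geq(\alpha+1)\alpha^{\lvert w\rvert}\lvert w\rvert$. Hence some prefix $\widehat\eta$ survives, and $s\sigma w\widehat\eta xt\in\pfl_{k,\alpha}^{\mathbb{Z}}$.

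I expect the main obstacle to be exactly this last quantitative step: turning ``every bad cut forces a high-exponent repetition ending at the boundary with a $t$-matching overhang'' into an explicit bound on the number of bad cuts that provably lies under $\lvert\eta\rvert+1$. This is where the length budget in the definition of $\Gamma$ and the single-occurrence hypothesis $\occur(s\sigma w,w)=1$ must be spent, and where $\alpha\geq\resalpha$ is needed to guarantee the four aligned occurrences of $x$ that drive the Fine–Wilf argument. In practice I would not rerun this repetition count from scratch but adapt the machinery already developed in \cite{10.1007/978-3-031-34326-1_12} and \cite{10.1007/978-3-030-48516-0_22} to the present two-sided gluing.
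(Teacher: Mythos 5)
This statement is not proved in the paper at all: it is imported verbatim as \cite[Theorem~1]{10.1007/978-3-031-34326-1_12}, so there is no internal proof to compare against, and your attempt has to stand on its own as a self-contained argument. Its first half does stand: the reduction to powers crossing the distinguished boundary occurrence of $x$, the observation that $x\in\Factor(r)$ forces the overhang into $t$ to be shorter than $\vert r\vert$, and the conclusion that every bad cut leaves behind an $\vert r\vert$-periodic suffix $r^{\beta'}$ of $s\sigma w\widehat\eta x$ with $\beta'>\alpha-1\geq 4$ are all correct and are the standard opening moves for this kind of gluing.

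The genuine gap is the entire second half, which is where the theorem actually lives. You reduce the problem to showing that the number of bad prefixes of $\eta$ is below $\vert\eta\vert+1$, assert that this count ``is of order $\alpha^{\vert w\vert}\vert w\vert$'', and justify it only by gesturing at Fine--Wilf for ``nearby'' bad cuts ``with a common period length'' and at $\occur(s\sigma w,w)=1$ for ``long'' periods --- before conceding that you would not carry out the count but would instead ``adapt the machinery'' of \cite{10.1007/978-3-031-34326-1_12} and \cite{10.1007/978-3-030-48516-0_22}. None of the needed quantifications are supplied: distinct bad cuts can have unrelated period lengths, so Fine--Wilf does not apply to an arbitrary pair; a window of exponent between $\alpha-1$ and $\alpha$ sitting inside the power-free word $s\sigma w\eta x$ is not yet a contradiction, so merging windows must be controlled precisely; and no mechanism is given for why the bound should be \emph{exponential} in $\vert w\vert$ with base $\alpha$, which is exactly the shape needed to beat the budget $\vert\eta\vert\geq(\alpha+1)\alpha^{\vert w\vert}\vert w\vert$ from the definition of $\Gamma$. (For what it is worth, the source proof is organized as an iterative shortening of $\eta$ --- at most on the order of $\vert w\vert$ rounds, each consuming a factor of $\alpha$ of the length budget --- rather than as a global count of bad cuts, so even the intended architecture differs.) As written, the proposal identifies where the difficulty is but does not resolve it; the theorem cannot be considered proved from this sketch without reproducing the technical core of the cited paper.
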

\begin{definition}
\label{dxhb7d6djwb}(\cite[Definition $4$]{10.1007/978-3-031-34326-1_12})
Suppose $v\in\pfl_{k,\alpha}^{\mathbb{Z}}$ and $w\in\Factor(v)\setminus\{\epsilon\}$. If there are $v_1\in\pfl_{k,\alpha}^{\mathbb{N},L}$ and $v_2\in\pfl_{k,\alpha}^{\mathbb{N},R}$ such that $v=v_1v_2$ and $w\in\Factor_r(v_2)$ then we say that $w$ is \emph{on-right-side recurrent} in $v$.
\end{definition}
\begin{remark}\cite[Remark $8$]{10.1007/978-3-031-34326-1_12})
Note in Definition \ref{dxhb7d6djwb} that no restriction is imposed on the recurrence of $w$ in $v_1$. This means that $w$ may also be recurrent in $v_1$.
\end{remark}

We show how to construct an element of $\Delta$ from a bi-infinite $\alpha$-power-free word $v$ with a finite factor $w$ containing the letter \(x\).
\begin{lemma}
\label{udi8ds0fne5df}
If $v\in\pfl_{k,\alpha}^{\mathbb{Z}}$, $w\in\Factor(v)\setminus\{\epsilon\}$, and $x\in\Factor(w)\cap\Sigma_k$ then there are $\overline s\in\pfl_{k,\alpha}^{\mathbb{N},L}$, and $\sigma ,\eta \in\Sigma_k^*$ such that $(\overline s,\sigma, w,\eta, x,\epsilon)\in\Delta$.
\end{lemma}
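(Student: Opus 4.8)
The plan is to convert the given bi-infinite word into a left-infinite one whose $x$-free tail and whose occurrence structure can simply be read off, invoking Proposition~\ref{nb3rdyd887j} for the one delicate junction. Two ingredients are immediate. First, since $k\ge 3$ and $\alpha\ge 5>2$, Lemma~\ref{dy77ejhfiffu} applied over the alphabet $\Sigma_k\setminus\{x\}$ together with the reversal operation yields a word $s\in\pfl_{k,\alpha}^{\mathbb{N},L}$ with $x\notin\Factor(s)$; this $s$ will be the $x$-avoiding word fed to Proposition~\ref{nb3rdyd887j}. Second, write $N=(\alpha+1)\alpha^{\vert w\vert}\vert w\vert$, and note that since $\vert\epsilon\vert=0\le\vert w\vert$, the requirement $(w,\eta,\epsilon)\in\Gamma$ in condition~\ref{du87ejh14} of Definition~\ref{ccn512rf1x} is exactly $\vert\eta\vert\ge N$. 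Also $(k,\alpha)\in\widetilde\Upsilon\subseteq\Upsilon$, so Proposition~\ref{nb3rdyd887j} is available.

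The heart of the argument is to produce a word $V\in\pfl_{k,\alpha}^{\mathbb{N},L}$ in which $x$ is recurrent and which has a suffix of the form $z=w\eta_0 x$ with $\vert\eta_0\vert\ge N$. Granting such a $V$, I would apply Proposition~\ref{nb3rdyd887j} with this $V$, the suffix $z\in\Suffix(V)$, the recurrent letter $x\in\Factor_r(V)\cap\Sigma_k$, and the word $s$; it returns a finite word $u$ with $z\in\Suffix(su)$ and $su\in\pfl_{k,\alpha}^{\mathbb{N},L}$. Put $W=su$. Because $x\notin\Factor(s)$, every occurrence of $x$ in $W$ lies in the finite part $u$, so $W$ has only finitely many occurrences of $x$, hence only finitely many occurrences of $w$ (each must cover an $x$); moreover $W$ ends in $x$, the last letter of $z$, and there is a well-defined leftmost occurrence of $w$ in $W$.

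I would then re-parse $W$ directly as the desired tuple. Let $x$ be the final letter of $W$, let the distinguished copy of $w$ be the leftmost occurrence of $w$ in $W$, let $\eta$ be the (finite) factor between the end of that copy and the final letter, let $\overline s$ be the part of $W$ strictly to the left of the leftmost occurrence of the letter $x$, and let $\sigma$ be the finite word joining $\overline s$ to the distinguished copy of $w$ (should the leftmost $x$ fall inside the distinguished copy, take $\sigma=\epsilon$ and let $\overline s$ end just before that copy). Then $\overline s\in\pfl_{k,\alpha}^{\mathbb{N},L}$ is $x$-free, giving the last condition of Definition~\ref{ccn512rf1x}, and $\overline s\sigma w\eta x=W\in\pfl_{k,\alpha}^{\mathbb{N},L}$ gives condition~\ref{du77bxn21b}. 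Since the distinguished copy is the leftmost occurrence and $\overline s\sigma$ lies entirely to its left, any occurrence of $w$ ending within $\overline s\sigma w$ would start no earlier and no later than it, hence coincide with it, so $\occur(\overline s\sigma w,w)=1$, which is condition~\ref{ddhy7vzlp}. Finally $\vert\eta\vert\ge N$: the copy of $w$ inside $z$ stands at distance $\ge\vert\eta_0\vert+1\ge N+1$ from the end of $W$, and the leftmost copy lies no further to the right, so at least $N$ letters separate it from the terminal $x$. Thus $(\overline s,\sigma,w,\eta,x,\epsilon)\in\Delta$.

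The step I expect to be the obstacle is the construction of $V$, that is, producing a left-infinite $\alpha$-power-free word with $x$ recurrent and suffix $w\eta_0 x$. This cannot in general be obtained by truncating $v$: the letter $x$ may occur only finitely often in $v$ (for example $v$ might contain a single $x$), and even when $x$ occurs infinitely often it may be recurrent only on the side opposite to the one we need, since the tuple forces $x$ to terminate the word while $x$ must recur toward $-\infty$. I would build $V$ in two stages, both relying on the abundance of $\alpha$-power-free words guaranteed by $(k,\alpha)\in\widetilde\Upsilon$: first extend $w$ to the right to an $\alpha$-power-free word $w\eta_0 x$ of the required length terminating in a (possibly newly created) occurrence of $x$; then extend $w\eta_0 x$ to the left into a left-infinite $\alpha$-power-free word in which $x$ recurs, for instance by interleaving the letter $x$ at steadily increasing gaps between long blocks drawn from an $x$-free infinite word. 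Controlling these junctions so that no $\alpha$-power is created is the delicate point, and it is precisely the largeness of $\alpha$ (with $\alpha\ge 5$) and the gap growth that make the verification go through.
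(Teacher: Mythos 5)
Your endgame is sound and close to the paper's: obtaining an $x$-free $s\in\pfl_{k,\alpha}^{\mathbb{N},L}$ from Lemma~\ref{dy77ejhfiffu} plus reversal, splicing with Proposition~\ref{nb3rdyd887j}, and re-parsing $su$ around the leftmost occurrences of $x$ and of $w$ all check out (in particular your verification of conditions \ref{du77bxn21b}, \ref{du87ejh14} and \ref{ddhy7vzlp} of Definition~\ref{ccn512rf1x} is correct). But the step you flag yourself -- producing $V\in\pfl_{k,\alpha}^{\mathbb{N},L}$ with $x$ recurrent and with suffix $w\eta_0x$, $\vert\eta_0\vert\geq N$ -- is a genuine gap, and it is where essentially all of the difficulty of the lemma lives. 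Your fallback sketch (append a ``possibly newly created'' $x$ on the right, then extend leftward by interleaving $x$ at growing gaps into an $x$-free word) is not an argument: inserting a letter into a power-free word without creating an $\alpha$-power is exactly the kind of delicate construction that constitutes the main technical content of the cited prior work, and nothing in your proposal verifies it. As written, the proof does not go through.

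The paper's proof shows that no such $V$ needs to be built from scratch, because the hypothesis already hands you a bi-infinite word $v$ containing $w$; everything is extracted from $v$ by a case analysis. If $\occur(v,w)=\infty$ or $x$ is on-right-side recurrent in $v$, one simply reads off a factorization $v=\overline v w\eta x\widehat v$ with $\eta$ as long as desired, so $(w,\eta,\epsilon)\in\Gamma$ for free. Otherwise one takes the \emph{last} occurrence of $w$ (giving $\occur(\overline vw,w)=1$), follows it by a sufficiently long prefix $\eta_0$ of the $x$-free right tail of $v$, and appends a single letter $x$; the appended $x$ creates no $\alpha$-power because the period $r$ of any forbidden power ending in that $x$ would have to have $\eta_0x$ as a suffix, hence $\vert r\vert>\vert w\overline\eta\vert$, forcing $w\in\Factor(rr)$ and so a second occurrence of $w$ -- a contradiction. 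This is precisely the justification your ``newly created occurrence of $x$'' is missing. Moreover, Proposition~\ref{nb3rdyd887j} is then invoked only in the subcase where $x$ already occurs in $\overline v$, with the recurrence of $x$ supplied by $v$ itself rather than manufactured; when $x\notin\Factor(\overline v)$ one sets $\overline s=\overline v$ directly and the proposition is not needed at all. I recommend you replace your construction of $V$ by this case analysis on the given $v$; your re-parsing paragraph can then be kept almost verbatim.
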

\begin{proof}
We distinguish following cases:
\begin{enumerate}
\item \label{dhu78djcn} If \(\occur(v,w)=\infty\) or \(x\) is on-right-side recurrent in \(v\) then it is easy to see that 
 there are $\overline v\in\pfl_{k,\alpha}^{\mathbb{N},L}$, $\widehat v\in\pfl_{k,\alpha}^{\mathbb{N},R}$,  and $\eta \in\Sigma_k^*$ such that $v=\overline vw\eta x\widehat v$ and $(w,\eta ,\epsilon)\in\Gamma$.
\item Otherwise; it means that  \(\occur(v,w)<\infty\) and \(x\) is not on-right-side recurrent in \(v\).
Then there are $\overline v\in\pfl_{k,\alpha}^{\mathbb{N},L}$, $\widehat v\in\pfl_{k,\alpha}^{\mathbb{N},R}$ and $\overline \eta \in\Sigma_k^*$ such that $v=\overline  vw\overline \eta \widehat v$,   
\begin{equation}\label{hghd7898e}x\not\in\Factor(\widehat v)\mbox{, }\end{equation}
and 
\begin{equation}\label{ndhd8773}\occur(\overline vw,w)=1\mbox{.}\end{equation}

Let $\eta_0\in\Prefix(\widehat v)$ be such that \begin{equation}\label{u9ddbh3ff3}\vert \eta_0\vert>\max\{\vert w\overline \eta \vert,(\alpha+1)\alpha^{\vert w\vert}\vert w\vert\}\mbox{.}\end{equation} We claim that $\overline vw\overline \eta \eta_0x\in\pfl_{k,\alpha}^{\mathbb{N},L}$. To get a contradiction, suppose that there are $r$ and $\beta>\alpha$ such that $r^{\beta}\in\Suffix(\overline vw\overline \eta \eta_0x)$. From (\ref{hghd7898e}) we have that $x\not\in\Factor(\eta_0)$; consequently $\eta_0x\in\pfl_{k,\alpha}$ and $\eta_0x\in\Suffix(r)$. From (\ref{u9ddbh3ff3}) we have that $\vert \eta_0\vert>
\vert w\overline \eta \vert$; it follows that $w\in\Factor(rr)$. From $\beta>\alpha \geq 5$ and (\ref{ndhd8773}), we conclude that are not such $r$ and $\beta$. Set $\eta =\overline \eta \eta_0$. From (\ref{u9ddbh3ff3}) it is obvious $(w,\eta ,\epsilon)\in\Gamma$. 
\end{enumerate}
In both cases we found $\overline v$ and $\eta $ such that $\overline vw\eta x\in\pfl_{k,\alpha}^{\mathbb{N},L}$, and $(w,\eta ,\epsilon)\in\Gamma$.

\begin{itemize}
\item If $x\not\in\Factor(\overline v)$ then let $\overline s=\overline v$. It is obvious  that $(\overline s,\epsilon, w,\eta, x,\epsilon)\in\Delta$.
\item If $x\in\Factor(\overline v)$ then let $t$ be a right infinite $\alpha$-power-free word on the alphabet $\Sigma_k\setminus\{x\}$. Since $(k,\alpha)\in\widetilde\Upsilon$, Lemma \ref{dy77ejhfiffu} asserts that $t$ exists. Let $s=t^R$. Clearly $s$ is a left infinite $\alpha$-power-free word and $x\not\in\Factor(s)$.

Let $z=w\eta x$. Proposition \ref{nb3rdyd887j} implies that there is a finite word $u$ such that $su\in\pfl_{k,\alpha}^{\mathbb{N},L}$ and $z\in\Suffix(su)$. Obviously there are $\overline s\in\pfl_{k,\alpha}^{\mathbb{N},L}$ and $\sigma,\eta_0 \in\Sigma_k^*$ such that \(\overline s\sigma w\eta_0 x=su\), \(\eta\in\Suffix(\eta_0)\), \(w\eta x\in\Suffix(\overline s\sigma w\eta_0 x)\), and $(\overline s,\sigma, w,\eta_0, x,\epsilon)\in\Delta$. Note that  since $x\in\Factor(w)$ and $x\not\in\Factor(\overline s)$  Property \ref{ddhy7vzlp} of Definition \ref{ccn512rf1x} is easy to be asserted by a proper choice of $\sigma$ and \(\eta_0\).
\end{itemize}

This completes the proof. 

\end{proof}

Suppose two bi-infinite $\alpha$-power-free words $s_1\widetilde w_1t_1$ and $s_2\widetilde w_2t_2$, where $s_1,s_2$ are left infinite words, $t_1,t_2$ are right infinite words, $\widetilde w_1, \widetilde w_2$ are finite words, the factors of $s_2$ are recurrent factors of $t_1$, and there is a letter $x$ such that $x$ is not a factor of $s_1,s_2,t_1,t_2$ and $x$ is a factor of $\widetilde w_2$. We show that there is a finite word $w$ such that $s_1\widetilde w_1w\widetilde w_2t_2$ is a bi-infinite $\alpha$-power-free word.
\begin{proposition}
\label{mr7e8kjif}
If $\widetilde w_1,\widetilde w_2\in \Sigma_k^+$, $s_1,s_2\in\Sigma_k^{\mathbb{N},L}$, $t_1,t_2\in\Sigma_k^{\mathbb{N},R}$, $x\in\Sigma_k$, $s_1\widetilde w_1t_1, s_2\widetilde w_2t_2\in\pfl_{k,\alpha}^{\mathbb{Z}}$, $\Factor(s_2)\subseteq \Factor_r(t_1)$, $x\not\in\Factor(s_1)\cup\Factor(s_2)\cup\Factor(t_1)\cup\Factor(t_2)$, and $x\in\Prefix(\widetilde w_2)$ then there is $w\in\Prefix(t_1)$ such that $s_1\widetilde w_1w\widetilde w_2t_2\in\pfl_{k,\alpha}^{\mathbb{Z}}$.
\end{proposition}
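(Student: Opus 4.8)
The plan is to build the bridge $w$ as a prefix of $t_1$ that ends with a long suffix of $s_2$, and then to verify $\alpha$-power-freeness of the resulting bi-infinite word by localizing any hypothetical forbidden power and exploiting that $x$ occurs only finitely often. Set $C=\tfrac13(|\widetilde w_1|+|\widetilde w_2|)$ and fix a suffix $p$ of $s_2$ with $|p|\ge \alpha C$; such a $p$ exists because $s_2$ is left infinite. By hypothesis $p\in\Factor(s_2)\subseteq\Factor_r(t_1)$, so $p$ occurs in $t_1$, and letting $w\in\Prefix(t_1)$ be the prefix of $t_1$ ending at such an occurrence we obtain $w\in\Prefix(t_1)$ with $p\in\Suffix(w)$. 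Write $W=s_1\widetilde w_1 w\widetilde w_2 t_2$ and index its letters so that the first letter of $\widetilde w_2$ is at position $0$; then $w$ occupies $[-|w|,0)$, its suffix $p$ occupies $[-|p|,0)$, and $\widetilde w_1$ occupies $[-|w|-|\widetilde w_1|,-|w|)$. Since $w\in\Prefix(t_1)$ and $x\notin\Factor(t_1)$ we have $x\notin\Factor(w)$, so in $W$ the letter $x$ occurs only inside the two finite blocks $\widetilde w_1$ and $\widetilde w_2$.

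Next I would argue by contradiction. Suppose $r^\beta$ with $\beta\ge\alpha$ is a factor of $W$ occupying an interval $[i,j)$. Because $w\in\Prefix(t_1)$, the left part $s_1\widetilde w_1 w$ is a factor of the $\alpha$-power-free word $s_1\widetilde w_1 t_1$, so the power cannot lie in $(-\infty,0)$, i.e. $j>0$. Because $p\in\Suffix(s_2)$, the right part $p\widetilde w_2 t_2$ is a factor of the $\alpha$-power-free word $s_2\widetilde w_2 t_2$, so the power cannot lie in $[-|p|,\infty)$, i.e. $i<-|p|$. Hence the occurrence contains all of $p$ and reaches position $0$, where the letter is $x$ (the first letter of $\widetilde w_2$).

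The key step combines periodicity with the scarcity of $x$. Since $r^\beta$ has period $|r|$ and position $0$ carries $x$, every position of $[i,j)$ congruent to $0$ modulo $|r|$ carries $x$, and since $\beta\ge\alpha\ge 5$ there are at least $\lfloor\beta\rfloor\ge 5$ such positions. All of them are genuine occurrences of $x$ in $W$, hence all lie in $\widetilde w_1$ or $\widetilde w_2$. The positions in each block form part of an arithmetic progression of common difference $|r|$, so $\widetilde w_1$ contains at most $|\widetilde w_1|/|r|+1$ of them and $\widetilde w_2$ at most $|\widetilde w_2|/|r|+1$; thus $5\le (|\widetilde w_1|+|\widetilde w_2|)/|r|+2$, giving $|r|\le C$. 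But $p$ is a contiguous factor of $r^\beta$, hence has period $|r|\le C$, and since $|p|\ge\alpha C\ge\alpha|r|$ the word $p$ is itself an $\alpha$-power, contradicting that $p$, a factor of $s_2\widetilde w_2 t_2$, is $\alpha$-power-free. This contradiction shows $W\in\pfl_{k,\alpha}^{\mathbb{Z}}$, so $w\in\Prefix(t_1)$ is the required bridge.

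The main obstacle is the counting estimate that bounds $|r|$: it is precisely where $\alpha\ge 5$ is used, since one needs at least five forced occurrences of $x$ squeezed into the two finite windows $\widetilde w_1,\widetilde w_2$, and it is what dictates choosing $|p|$ large relative to $|\widetilde w_1|+|\widetilde w_2|$ so that the final periodicity argument manufactures an honest $\alpha$-power inside $p$. A secondary point requiring care is the localization: verifying that the two ``safe zones'' $s_1\widetilde w_1 w$ and $p\widetilde w_2 t_2$ really are factors of the two given power-free words, and that a power avoiding both must contain all of $p$ and reach position $0$. I would also note that this direct construction, rather than a black-box appeal to Theorem \ref{d78fju5e4}, seems necessary here because the statement prescribes the specific left part $s_1\widetilde w_1$, which the $\Delta$-machinery does not preserve.
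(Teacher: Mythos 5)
Your proposal is correct, and its skeleton coincides with the paper's: both proofs take the bridge $w$ to be a prefix of $t_1$ ending in a suffix $p$ of $s_2$ chosen long compared with $\widetilde w_1$ and $\widetilde w_2$, so that $s_1\widetilde w_1w$ sits inside $s_1\widetilde w_1t_1$ and $p\widetilde w_2t_2$ sits inside $s_2\widetilde w_2t_2$, forcing any forbidden power $r^{\beta}$ to straddle the seam, cover all of $p$, and hit the letter $x$ at the start of $\widetilde w_2$. Where you diverge is in how the contradiction is extracted. The paper counts occurrences of the \emph{word} $px$: since $x$ is confined to the two finite blocks and $|p|$ exceeds both block lengths, $px$ occurs at most twice in the whole word, while periodicity of $r^{\beta}$ with $\beta\geq 5$ and $px\in\Factor(rr)$ forces at least three occurrences (this is where it needs $|p|>\max\{|\widetilde w_1|,|\widetilde w_2|\}$ but nothing depending on $\alpha$). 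You instead count occurrences of the \emph{letter} $x$ along the arithmetic progression of step $|r|$ through position $0$: at least five are forced, at most $(|\widetilde w_1|+|\widetilde w_2|)/|r|+2$ fit in the two blocks, which bounds $|r|\leq\tfrac13(|\widetilde w_1|+|\widetilde w_2|)$, and then the choice $|p|\geq\alpha\cdot\tfrac13(|\widetilde w_1|+|\widetilde w_2|)$ makes $p$ itself a $\geq\alpha$-power of period $|r|$, contradicting power-freeness of $s_2$. Your route costs a longer $p$ (length proportional to $\alpha$ times the block lengths rather than just exceeding them) but is arguably cleaner in that the contradiction lands on an explicit $\alpha$-power inside $p$ rather than on an occurrence count of $px$, and it isolates exactly where $\alpha\geq 5$ enters (three forced $x$'s beyond the two blocks can absorb). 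One small remark: the quantity $|\widetilde w_1|/|r|+1$ bounds the number of progression points meeting a block only because the block is an interval of that length; you say this correctly, but it is the one place a reader must pause. I see no gap.
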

\begin{proof}
Let $p\in \Suffix(s_2)$ be the shortest suffix of \(s_2\) such that \begin{equation}\label{dju8w0dj9sj9}\vert p\vert>\max\{\vert \widetilde w_1\vert,\vert \widetilde w_2\vert\}\mbox{.}\end{equation} Let $h\in \Prefix(t_1)$ be the shortest prefix of $t_1$ such that $hp\in \Prefix(t_1)$ and $\vert h\vert>\vert p\vert$; such $h$ exists, because $\Factor(s_2)\subseteq \Factor_r(t_1)$.
We have that $s_1\widetilde w_1hp\in \pfl_{k,\alpha}^{\mathbb{N},L}$, since $hp\in \Prefix(t_1)$ and $s_1\widetilde w_1t_1\in \pfl_{k,\alpha}^{\mathbb{Z}}$. We show that $s_1\widetilde w_1hp\widetilde w_2t_2\in \pfl_{k,\alpha}^{\mathbb{Z}}$.

To get a contradiction, suppose that there are $g\in \Prefix(\widetilde w_2t_2)$, $y\in\Sigma_k$, $r\in \Sigma_k^+$, and $\beta> \alpha$ such that $r^{\beta}\in \Suffix(s_1\widetilde w_1hpgy)$, $gy\in\Prefix(\widetilde w_2t_2)$, and $s_1\widetilde w_1hpg\in\pfl_{k,\alpha}^{\mathbb{N},L}$. 
Since $p\widetilde w_2t_2\in\pfl_{k,\alpha}^{\mathbb{N},R}$ and $x\in\Prefix(\widetilde w_2)$ we have that $\vert r^{\beta}\vert>\vert pgy\vert$ and $x\in\Factor(r)$.

Let $\widehat w_2\in\Sigma_k^*$ be such that $\widehat w_2x\in\Prefix(\widetilde w_2)$ and $x\not\in\Factor(\widehat w_2)$. Clearly $\widehat w_2$ exists and is uniquely determined; realize that $\widehat w_2$ may be the empty word. 

Since $p\widehat w_2x\in\Factor(r^{\beta})$ and $x\not\in\Factor(p\widehat w_2)$, 
we have that \begin{equation}\label{djfuf8ejj8}p\widehat w_2x\in\Factor(rr)\mbox{.}\end{equation}
Recall $x\not\in\Factor(s_1)\cup\Factor(hp)\cup \Factor(t_2)$. Then it follows from (\ref{dju8w0dj9sj9}) that \begin{equation}\label{dy66jiif89ee}\occur(s_1\widetilde w_1hp\widetilde w_2t_2,p\widehat w_2x)\leq 2\mbox{.}\end{equation}

From (\ref{djfuf8ejj8}), (\ref{dy66jiif89ee}), and $\beta>\alpha\geq 5$ we conclude that 
there are no such $r,g,\beta$ and hence $s_1\widetilde w_1hp\widetilde w_2t_2\in \pfl_{k,\alpha}^{\mathbb{Z}}$. 

This completes the proof.
\end{proof}

The main theorem of the article says that if $w_3,w_4$ are finite non-empty factors of bi-infinite $\alpha$-power-free words then there is a bi-infinite $\alpha$-power-free word $v$ and a finite word $w_0$ such that $w_3w_0w_4$ is a factor of $v$.
\begin{theorem}
\label{nnbdh8de}
If $v_1, v_2\in\pfl_{k,\alpha}^{\mathbb{Z}}$, $w_3\in\Factor(v_1)\setminus\{\epsilon\}$, and $w_4\in\Factor(v_2)\setminus\{\epsilon\}$ then there are $w_0\in\Sigma_k^*$ and $v\in\pfl_{k,\alpha}^{\mathbb{Z}}$ such that $w_3w_0w_4\in\Factor(v)$.
\end{theorem}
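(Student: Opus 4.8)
The plan is to produce $v$ by gluing two bi-infinite words with Proposition \ref{mr7e8kjif}: a \emph{left block} $s_1\widetilde w_1t_1$ carrying $w_3$ inside $\widetilde w_1$, and a \emph{right block} $s_2\widetilde w_2t_2$ carrying $w_4$ inside $\widetilde w_2$ and beginning $\widetilde w_2$ with a marker letter $x$. I fix $x\in\Factor(w_4)\cap\Sigma_k$ (possible since $w_4\neq\epsilon$); this $x$ will be the non-recurrent letter demanded by Proposition \ref{mr7e8kjif}. The two blocks must be linked by the compatibility $\Factor(s_2)\subseteq\Factor_r(t_1)$, which I will guarantee by fixing $t_1$ first and then taking $s_2\in\Phi(t_1)$ via Lemma \ref{lrktikl009iu8}. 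Once both blocks and this compatibility are in place, Proposition \ref{mr7e8kjif} yields $w\in\Prefix(t_1)$ with $s_1\widetilde w_1w\widetilde w_2t_2\in\pfl_{k,\alpha}^{\mathbb{Z}}$; writing $\widetilde w_1=aw_3b$ and $\widetilde w_2=cw_4d$ and setting $v=s_1\widetilde w_1w\widetilde w_2t_2$, $w_0=bwc$, gives $w_3w_0w_4\in\Factor(v)$, as required.

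For the left block I first produce $t_1$, splitting into two cases. If $x\notin\Factor(v_1)$, I simply write $v_1=s_1w_3t_1$ with $\widetilde w_1=w_3$; then $s_1,t_1$ avoid $x$ automatically. If $x\in\Factor(v_1)$, I instead pick a finite factor $w_3'\in\Factor(v_1)$ containing one occurrence of $w_3$ and one occurrence of $x$ (a finite window of the bi-infinite $v_1$ captures both), apply Lemma \ref{udi8ds0fne5df} to $(v_1,w_3',x)$ to obtain $(\overline s',\sigma',w_3',\eta',x,\epsilon)\in\Delta$, fix a right-infinite $\alpha$-power-free word $t_1$ over $\Sigma_k\setminus\{x\}$ (which exists by Lemma \ref{dy77ejhfiffu}), and invoke Theorem \ref{d78fju5e4} to get $\widehat\eta'\in\Prefix(\eta')$ with $\overline s'\sigma'w_3'\widehat\eta'xt_1\in\pfl_{k,\alpha}^{\mathbb{Z}}$; I then set $s_1=\overline s'$ and $\widetilde w_1=\sigma'w_3'\widehat\eta'x$. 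In both cases $s_1\widetilde w_1t_1\in\pfl_{k,\alpha}^{\mathbb{Z}}$, $w_3\in\Factor(\widetilde w_1)$, and $x\notin\Factor(s_1)\cup\Factor(t_1)$. With $t_1$ fixed, I take $s_2\in\Phi(t_1)$; then $\Factor(s_2)\subseteq\Factor_r(t_1)$ and, since $t_1$ avoids $x$, also $x\notin\Factor(s_2)$.

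For the right block I use the same $x$ together with the reversal, so that the marker lands at the \emph{front} of $\widetilde w_2$. Since $w_4\in\Factor(v_2)$ gives $w_4^R\in\Factor(v_2^R)$ and $x\in\Factor(w_4^R)$, I apply Lemma \ref{udi8ds0fne5df} to $(v_2^R,w_4^R,x)$ to obtain $(\overline s,\sigma,w_4^R,\eta,x,\epsilon)\in\Delta$, and then Theorem \ref{d78fju5e4} with the right-infinite word $s_2^R$ (which is $\alpha$-power-free and avoids $x$ because $s_2$ is) to get $\widehat\eta\in\Prefix(\eta)$ with $\overline s\sigma w_4^R\widehat\eta xs_2^R\in\pfl_{k,\alpha}^{\mathbb{Z}}$. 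Reversing this word, which preserves $\alpha$-power-freeness since every finite factor of the reversal is the reversal of a finite factor of the original, produces $s_2x\widehat\eta^Rw_4\sigma^R\overline s^R\in\pfl_{k,\alpha}^{\mathbb{Z}}$; I set $\widetilde w_2=x\widehat\eta^Rw_4\sigma^R$ and $t_2=\overline s^R$. This gives $x\in\Prefix(\widetilde w_2)$, $w_4\in\Factor(\widetilde w_2)$, and $x\notin\Factor(t_2)$ (because $x\notin\Factor(\overline s)$), so all hypotheses of Proposition \ref{mr7e8kjif} hold.

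The main obstacle is the left block: I need \emph{both} of its tails to avoid $x$ while keeping $w_3$ inside, yet the only mechanism that attaches a prescribed finite factor to an arbitrary $x$-free right-infinite tail with an $x$-free left tail (Lemma \ref{udi8ds0fne5df} feeding Theorem \ref{d78fju5e4}) forces the marker $x$ to sit \emph{inside} the finite factor passed to Lemma \ref{udi8ds0fne5df}. This is exactly what dictates the case split: when $x$ already occurs in $v_1$ I enlarge $w_3$ to a factor $w_3'$ of $v_1$ that also contains an occurrence of $x$, whereas when $x$ is absent from $v_1$ I must fall back on reusing $v_1$ itself. The remaining work is bookkeeping: tracking the reversal in the right block, and ordering the construction so that $t_1$ is fixed \emph{before} $s_2$ is drawn from $\Phi(t_1)$, which is what couples the two blocks through the recurrence hypothesis $\Factor(s_2)\subseteq\Factor_r(t_1)$.
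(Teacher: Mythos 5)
Your construction is correct and follows essentially the same route as the paper: both blocks are produced by Lemma \ref{udi8ds0fne5df} followed by Theorem \ref{d78fju5e4}, coupled through $s_2\in\Phi(t_1)$ and glued by Proposition \ref{mr7e8kjif}, with the marker $x$ placed as a prefix of $\widetilde w_2$ via reversal. The only (harmless) difference is the choice of marker and the resulting case split: you take $x\in\Factor(w_4)$ and distinguish whether $x$ occurs in $v_1$, whereas the paper takes $x\in\Factor(v_1)\cap\Factor(v_2)$ and disposes of the case $\Factor(v_1)\cap\Factor(v_2)=\emptyset$ by direct concatenation.
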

\begin{proof}
If $\Factor(v_1)\cap\Factor(v_2)=\emptyset$ then let $v_{11}$, $v_{12}$, $v_{21}$, and $v_{22}$ be such that $v_1=v_{11}v_{12}$, $v_2=v_{21}v_{22}$, $w_3\in\Factor(v_{11})$, and $w_4\in\Factor(v_{22})$. It is clear $v_{11}{v_{22}}$ is $\alpha$-power-free since $\Factor(v_{11})\cap\Factor(v_{22})=\emptyset$. The theorem follows.

Thus suppose that $\Factor(v_1)\cap\Factor(v_2)\not=\emptyset$ and let $x\in\Factor(v_1)\cap\Factor(v_2)\cap\Sigma_k$.
Let \(w_1\in\Factor(v_1)\) and \(w_2\in\Factor(v_2)\) be such that \(x\in\Factor(w_1)\cap\Factor(w_2)\), \(w_3\in\Factor(w_1)\), and \(w_4\in\Factor(w_2)\). It is clear that such \(w_1,w_2\) exist.

Lemma \ref{udi8ds0fne5df} asserts that there are $s_1\in\pfl_{k,\alpha}^{\mathbb{N},L}$ and $\sigma_1 ,\eta_1\in\Sigma_k^*$ such that $(s_1,\sigma_1, w_1,\eta_1, x,\epsilon)\in\Delta$.
Lemma \ref{udi8ds0fne5df} asserts also that there are $t_2\in\pfl_{k,\alpha}^{\mathbb{N},R}$ and $\sigma_2,\eta_2\in\Sigma_{k}^*$ such that $(t_2^R,\sigma_2^R, w_2^R,\eta_2^R, x,\epsilon)\in\Delta$. 

Let $t_1\in\pfl_{k-1,\alpha}^{\mathbb{N},R}$ be a right infinite $\alpha$-power-free word on the alphabet $\Sigma_k\setminus\{x\}$. Since $\alpha\geq 5$, Lemma \ref{dy77ejhfiffu} asserts that $t_1$ exists. Let $s_2\in\Phi(t_1)$; Lemma \ref{lrktikl009iu8} implies that $\Phi(t_1)\not=\emptyset$ and consequently \(s_2\) exists. Clearly $s_2$ is $\alpha$-power-free and $x\not\in\Factor(s_2)$.

Theorem \ref{d78fju5e4} implies that there are $\widehat \eta_1\in\Prefix(\eta_1)$ and $\widehat \eta_2\in\Suffix(\eta_2)$ such that $s_1\sigma_1w_1\widehat \eta_1xt_1, t_2^R\sigma_2^Rw_2^R\widehat \eta_2^R xs_2^R\in\pfl_{k,\alpha}^{\mathbb{Z}}$. Since the reverse operation preserves the power-freeness we have also that 
$s_2x\widehat \eta_2w_2\sigma_2t_2\in\pfl_{k,\alpha}^{\mathbb{Z}}$.

Let $\widetilde w_1=\sigma_1w_1\widehat \eta_1x$ and let $\widetilde w_2=x\widehat \eta_2w_2\sigma_2$. Then Proposition \ref{mr7e8kjif} implies that there is $w\in\Sigma_k^*$ such that $s_1\widetilde w_1 w\widetilde w_2t_2\in\pfl_{k,\alpha}^{\mathbb{Z}}$. 

Table \ref{hsud78v225} shows the structure of the resulting word using also the notation from the proof of Proposition \ref{mr7e8kjif}.

\begin{table}
\centering
\begin{tabular}{|l|cccc|l|l|llll|l|}
\hline
                            & \multicolumn{4}{c|}{$\widetilde w_1$}                                                                         &     &     & \multicolumn{4}{c|}{$\widetilde w_2$}                                                                         &       \\ \hline
\multicolumn{1}{|c|}{$s_1$} & \multicolumn{1}{c|}{$\sigma_1$} & \multicolumn{1}{c|}{$w_1$} & \multicolumn{1}{c|}{$\widehat \eta_1$} & $x$ & $h$ & $p$ & \multicolumn{1}{l|}{$x$} & \multicolumn{1}{l|}{$\widehat \eta_2$} & \multicolumn{1}{l|}{$w_2$} & $\sigma_2$ & $t_2$ \\ \hline
\end{tabular}
\caption{The structure of the word $v$.}
\label{hsud78v225}
\end{table}

Since \(w_3\in\Factor(w_1)\), \(w_4\in\Factor(w_2)\), \(w_1\in\Factor(\widetilde w_1)\), and \(w_2\in\Factor(\widetilde w_2)\), the theorem follows.
This completes the proof.
\end{proof}

Theorem \ref{nnbdh8de} has the following obvious corollary.
\begin{corollary}
\label{dj99813jf}
    Conjecture \ref{rhju558rf8ui} holds for the languages $L_{k,\alpha}$, where $\alpha\geq 5$ and $k\geq 3$.
\end{corollary}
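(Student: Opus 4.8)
The plan is to read the corollary off as an immediate consequence of Theorem \ref{nnbdh8de}, so the only real work is to reconcile the notation of Conjecture \ref{rhju558rf8ui} with that of the theorem. First I would unpack the definition of the extendable set: for the language $L=\pfl_{k,\alpha}$, a bi-infinite word ``respecting the given power-freeness'' is precisely an element of $\pfl_{k,\alpha}^{\mathbb{Z}}$, so $e(\pfl_{k,\alpha})$ consists exactly of those finite $\alpha$-power-free words $u$ for which there exists $v\in\pfl_{k,\alpha}^{\mathbb{Z}}$ with $u\in\Factor(v)$.

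Next, given $u,v\in e(\pfl_{k,\alpha})$, I would dispose of the degenerate cases where $u=\epsilon$ or $v=\epsilon$ by taking $w=\epsilon$, since then $uwv$ equals $v$ or $u$ respectively and already lies in $e(\pfl_{k,\alpha})$ by hypothesis. For the main case, with $u,v$ both non-empty, the definition of $e$ supplies witnesses $v_1,v_2\in\pfl_{k,\alpha}^{\mathbb{Z}}$ with $u\in\Factor(v_1)\setminus\{\epsilon\}$ and $v\in\Factor(v_2)\setminus\{\epsilon\}$. These are exactly the hypotheses of Theorem \ref{nnbdh8de} with $w_3=u$ and $w_4=v$.

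I would then invoke Theorem \ref{nnbdh8de} to obtain a finite word $w_0\in\Sigma_k^*$ and a bi-infinite word $\widetilde v\in\pfl_{k,\alpha}^{\mathbb{Z}}$ with $uw_0v=w_3w_0w_4\in\Factor(\widetilde v)$. Setting $w=w_0$, the word $uwv$ is a finite factor of the bi-infinite $\alpha$-power-free word $\widetilde v$; hence $uwv$ is itself $\alpha$-power-free and is extendable to a bi-infinite $\alpha$-power-free word, so $uwv\in e(\pfl_{k,\alpha})$, which is the desired conclusion. Since $(k,\alpha)\in\widetilde\Upsilon$ names exactly the range $\alpha\geq 5$, $k\geq 3$ in which Theorem \ref{nnbdh8de} is available, the conjecture follows for precisely these languages.

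There is no genuine obstacle at this final step: all the difficulty has already been absorbed into Theorem \ref{nnbdh8de}, and through it into Proposition \ref{mr7e8kjif}, Lemma \ref{udi8ds0fne5df}, and the gluing constructions imported from \cite{10.1007/978-3-030-48516-0_22} and \cite{10.1007/978-3-031-34326-1_12}. The only points that require care are matching the conjecture's notion of extendability to membership in $\pfl_{k,\alpha}^{\mathbb{Z}}$ and handling the trivial empty-word cases, both of which are routine.
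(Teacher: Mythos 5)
Your proposal is correct and matches the paper, which gives no explicit argument at all: it simply calls the corollary an ``obvious'' consequence of Theorem \ref{nnbdh8de}, and your write-up is exactly the routine unpacking (identifying $e(\pfl_{k,\alpha})$ with the set of finite factors of words in $\pfl_{k,\alpha}^{\mathbb{Z}}$, handling the empty-word cases, and applying the theorem with $w_3=u$, $w_4=v$) that the author leaves implicit.
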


\bibliographystyle{siam}
\IfFileExists{biblio.bib}{\bibliography{biblio}}{\bibliography{../!bibliography/biblio}}

\end{document}